\documentclass{llncs}
\usepackage{epsfig}
\usepackage{amsmath}
\usepackage{amssymb}

\newcommand{\set}[2]{\{#1\,|\,#2\}}
\newcommand{\tuple}[1]{\left<{#1}\right>}
\newcommand{\Dom}{\mbox{\sl dom\/}}

\DeclareMathOperator{\maxpref}{\sqcap}

\def\bbbr{{\rm I\!R}}

\begin{document}

\title{Myhill-Nerode Relation for Sequentiable Structures}
\author{Stefan Gerdjikov\inst{1,2} \and Stoyan Mihov\inst{2}}

\institute{
Faculty of Mathematics and Informatics\\
Sofia University\\
5, James Borchier Blvd., Sofia 1164, Bulgaria\\
stefangerdzhikov@fmi.uni-sofia.bg
\and
Institute of Information and Communication Technologies\\
Bulgarian Academy of Sciences\\
25A, Acad. G. Bonchev Str., Sofia 1113, Bulgaria\\
stoyan@lml.bas.bg
}

\maketitle

\begin{abstract}
Sequentiable structures are a subclass of monoids that generalise 
the free monoids and the monoid of non-negative real numbers with addition.
In this paper we consider functions $f:\Sigma^*\rightarrow {\cal M}$ and
define the Myhill-Nerode relation for these functions. We prove that
a function of finite index, $n$, can be represented with a subsequential
transducer with $n$ states.
\keywords{subsequential transducers, Nerode-Myhill relation, minimisation}
\end{abstract}

\section{Introduction}
Finite-state automata and transducer are widely used in many areas and applications of computer science \cite{HMU01}. Subsequential finite-state transducers are highly computationally efficient for language modelling and text processing tasks, and thus provide a very desirable technique in computational linguistics \cite{Mo95,Mo97,RS97}.

Sequentail structures were introduced in~\cite{LATA2017}. They generalise the notion of free monoids and non-negative real numbers with addition. In~\cite{LATA2017} we characterised the subsequential rational functions that map words to sequentiable monoids. In this paper we study the theoretical foundations for the minimisation of subsequential rational functions over sequentiable structures. We generalise the notion of Myhill-Nerode relation for rational structures over free monoids and real numbers,~\cite{Mohri00}, to the case of sequentiable structures. Our main contribution is Theorem~\ref{th:Nerode-Myhill} which proves that rational functions of finite index are sequentiable.
\section{Formal Preliminaries}
We assume that the reader is familiar with the basic notions of an alphabet and monoid, see~\cite{Eil74}. 

\begin{definition}\label{semi-lattice}
${\cal L}=\tuple{L,\le,\maxpref}$ is called a meet semi-lattice if $\tuple{L,\le}$ is a partially ordered set
and for any two elements $a,b\in L$, $c=a\maxpref b\in L$ is the biggest element w.r.t. $\le$ such that 
$c\le a$ and $c\le b$.
\end{definition}

\begin{definition}\label{DefSubsequentialTransducer}
A {\em monoidal subsequential finite-state transducer} is a tuple 
${\cal T}=\tuple{\Sigma,{\cal M},Q,q_0,F,\delta,\lambda,\iota,\Psi}$ 
where:
\begin{itemize}
\item
$\Sigma$ is a finite alphabet,
\item
  ${\cal M}=\tuple{M,\circ,e}$ is a monoid, 
\item
  $Q$ is a finite set of states, 
\item
  $q_0\in Q$ is the initial state, 
\item 
  $F\subseteq Q$ is the set of final states,  
\item
  $\delta :  Q\times \Sigma \rightarrow Q$ 
  is (possibly partial) function called transition function,
\item
  $\lambda :  Q\times \Sigma \rightarrow M$ 
  is a function with domain, $\Dom(\lambda)=\Dom(\delta)$, called the transition output function,
\item $\iota\in M$, initial output, and
\item
$\Psi : F \rightarrow {\cal M}$ is the state output function.
\end{itemize}

The {\em generalised transition function} $\delta^\ast$ is defined as the
inclusion-wise least function on $Q \times \Sigma^\ast \rightarrow Q$ with the following closure 
properties:
\begin{itemize}
\item
  for all $q\in Q$ we have $\delta^\ast(q,\varepsilon)=q$.
\item
  For all $q\in Q, \alpha \in \Sigma^\ast$ and $a \in \Sigma$ : $\delta^\ast(q,\alpha a) = \delta(\delta^\ast(q,\alpha),a)$.
\end{itemize}
The {\em generalised transition output function function} $\lambda^\ast$ is the 
inclusion-wise least function on $Q \times \Sigma^\ast \rightarrow M$ with the following closure 
properties:
\begin{itemize}
\item
  for all $q\in Q$ we have $\lambda^\ast(q,\varepsilon)=e$.
\item
  For all $q\in Q, \alpha \in \Sigma^\ast$ and $a \in \Sigma$ : $\lambda^\ast(q,\alpha a) = \lambda^\ast(q,\alpha) \circ \lambda(\delta^\ast(q,\alpha),a)$.
\end{itemize}

The {\em output function} represented by the subsequential finite-state transducer ${\cal T}$ is  ${\cal O}_{\cal T} : \Sigma^\ast \rightarrow M$ defined as: ${\cal O}_{\cal T}(\alpha)=\iota\lambda^\ast(q_0,\alpha) \circ \Psi(\delta^\ast(q_0,\alpha))$ if $\delta^\ast(q_0,\alpha) \in F$.
\end{definition}

\section{Sequentiable Structures}
Pre-sequentiable and sequentiable structures were introduced in~\cite{LATA2017}. Here we give an equivalent definition based
on the following relation of monoidal elements.
\begin{definition}
For a monoid ${\cal M}$ we define the relation $\le_M$ on ${\cal M}$ as:
\begin{equation*}
a\le_M b \iff \exists c\in M (a c =b).
\end{equation*}
\end{definition}
\begin{definition}
A tuple ${\cal M} = \tuple{M,\circ,e,\maxpref,\|.\|}$ is a {\em pre-sequentiable structure} if
\begin{enumerate}
\item $\tuple{M,\circ,e}$ is a monoid, which supports left cancelation,
\item $\tuple{M,\le_M,\maxpref}$ is a meet semi-lattice.
\item The function $\|.\| : M \rightarrow \bbbr^+$ called {\em norm} is a homomorphism of the monoids $\tuple{M,\circ,e}$ and $\tuple{\bbbr^+,+,0}$, which maps only $e$ to $0$. I.e.  $\|a\| = 0 \rightarrow a=e$.
\end{enumerate}
\end{definition}

\begin{definition}\label{DefinitionSequentialStructure}
A pre-sequentiable structure  ${\cal M}= \tuple{M,\circ,e,\maxpref,\|.\|}$ is called {\em sequentiable structure} if the condition: 
\begin{equation}\label{Monotone}
( a\le_M c \, \& \, b\le_M d\, \& \, a\maxpref b =e \rightarrow c\maxpref d=e)
\end{equation}
holds for all $a,b,c,d\in M\setminus\{e\}$.
\end{definition}
In case of sequentiable structures we additionally get the right cancellation property and Levy-like lemma:  
\begin{proposition}\label{RightCancellation}
Let ${\cal M}=\tuple{M,\circ,e,\maxpref,\|.\|}$ be a sequentiable structure.
\begin{enumerate}
\item If $\|a\| \le \|b\|$ and $a\le_M bc$, then $a\le_M b$.
\item $\forall a,b,c\in M (a\circ c=b\circ c\rightarrow a=b)$ (right cancellation property),
\item for all $a_1,a_2,b_1,b_2 \in M$ with $a_1 \circ a_2 = b_1 \circ b_2 \ \&\ \|b_1\| \ge \|a_1\|$ there is a $c$
such that $a_1 c=b_1$ and $cb_2=a_2$.
\end{enumerate}
\end{proposition}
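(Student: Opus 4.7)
The plan is to prove part (1) first; the other two parts follow easily from it.

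For part (1), I would begin by writing $x = a\maxpref b$ and decomposing $a = xa'$, $b = xb'$ by left cancellation. The first key observation is that $a'\maxpref b'= e$: any $y$ with $y\le_M a'$ and $y\le_M b'$ yields $xy\le_M a$ and $xy\le_M b$, so $xy\le_M x$; writing $xyz = x = x\cdot e$ and left-cancelling gives $yz=e$, so $\|y\|+\|z\|=0$ and $y=e$. Having $a\le_M bc$ gives some $d$ with $ad = bc$, whence $a'd=b'c$ after left-cancelling $x$. Then I perform a case analysis:
\begin{itemize}
\item If $a'=e$, then $a=x\le_M b$ and we are done.
\item If $b'=e$, then $b=x$ and the norm hypothesis gives $\|x\|+\|a'\|=\|a\|\le\|b\|=\|x\|$, so $\|a'\|=0$, forcing $a'=e$ and reducing to the previous case.
\item If both $a'$ and $b'$ are non-$e$, then $a'd$ and $b'c$ are also non-$e$ (their norms exceed $\|a'\|>0$). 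The sequentiability condition \eqref{Monotone}, applied with $a',b',a'd,b'c$, yields $a'd\maxpref b'c=e$. But $a'd=b'c$, so this common value is both equal to and $\le_M$-below $e$, forcing $a'd=e$, a contradiction.
\end{itemize}

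The trivial boundary cases $a=e$ or $b=e$ are immediate from the norm axiom. Part (2) follows directly: from $ac=bc$ the norm homomorphism gives $\|a\|=\|b\|$, and $a\le_M bc$ is witnessed by $c$ itself, so part (1) yields $a\le_M b$; writing $au=b$ and comparing norms forces $u=e$, hence $a=b$. For part (3) I rename to apply part~(1): the equation $a_1a_2=b_1b_2$ shows $a_1\le_M b_1b_2$, and $\|a_1\|\le\|b_1\|$ then gives $a_1\le_M b_1$, say $a_1c=b_1$. Substituting and left-cancelling $a_1$ from $a_1a_2=a_1cb_2$ produces $a_2=cb_2$.

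The main obstacle is part~(1), and specifically the subtle use of the sequentiability condition \eqref{Monotone} together with the non-triviality side-conditions required by it. The clean separation $a=xa'$, $b=xb'$ with coprime $a'\maxpref b'=e$ is what lets \eqref{Monotone} bite, and one must be careful to dispose of the degenerate cases ($a'=e$, $b'=e$, or $x=e$) before invoking it; the norm hypothesis is used precisely to eliminate the asymmetric case $b'=e,\ a'\ne e$.
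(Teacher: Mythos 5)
Your proposal is correct and follows essentially the same route as the paper: factor out $a\maxpref b$, observe the remainders are coprime, and use the norm together with condition~\eqref{Monotone} to force the remainder of $a$ to be $e$; parts (2) and (3) are then derived from part (1) exactly as in the paper. You are in fact slightly more careful than the paper, since you explicitly prove the sub-fact $a'\maxpref b'=e$ and check the non-triviality hypotheses of~\eqref{Monotone}, both of which the paper leaves implicit.
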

\begin{proof}
1. Let $m=a\maxpref b$. We prove that $m=a$. Indeed, $a=ma_1$ and $b=mb_1$.
By the left cancellation property, we have that $a_1\le_M b_1 c$.
Next, since $\|a\|\le \|b\|$ we get that $\|a_1\|\le \|b_1\|$. Assume that $a_1\neq e$. Thus,
$\|a_1\|>0$ and therefore, $\|b_1\|\neq 0$, in particular $e<_M b_1$. This shows, that $e<_M a_1\le b_1 c$
and $e<_M b_1\le b_1c$. Since $a_1\maxpref b_1=e$, by Equation~\ref{Monotone},
we conclude that $b_1c=e$ which is a contradiction. Therefore, $a_1=e$ and thus $a=m\le_M b$.

2. By the properties of the homomorphism, we have $\|a\|=\|b\|$. Now the equation $ac=bc$ implies
that $a\le_M bc$ and by Point 1, $a\le_M b$. Similarly, $b\le_M a$. Since $\le_M$ is a partial order,
$a=b$.

3. We have that $a_1\le_M b_1 b_2$. Since $\|a_1\|\le \|b_1\|$, by Point 1, we conclude $a_1\le_M b_1$.
Hence, there is some $c$ with $a_1 c=b_1$. Therefore by the left cancellation $a_2=cb_2$.
\end{proof}

\section{Myhill-Nerode Relation}
In this section we extend in a natural way the definition of the Nerode-Myhill relation for rational functions over free monoids,~\cite{Mohri00}, to arbitrary sequentiable functions. 

\subsubsection*{The Myhill-Nerode relation for subsequential finite-state transducers}
\begin{definition}\label{DefMyhillNerodeRelationSubseq}
Let $f : \Sigma^\ast \rightarrow M$ be a (partial) function. Then 
\[
\begin{array}{ll}
R_f = \set{\tuple{u,v}\in \Sigma^\ast\times \Sigma^\ast} {& \exists u', v' \in M\ \forall w\in\Sigma^\ast:  \\
&(u\cdot w \in \Dom(f) \leftrightarrow v\cdot w \in \Dom(f))\ \& \\
 & (u\cdot w \in \Dom(f) \rightarrow u'^{-1} f(u\cdot w) = v'^{-1} f(v \cdot w)) }
\end{array}
\]
is called the {\em Myhill-Nerode relation}\index{Myhill-Nerode relation} for $f$. 
\end{definition}
Note that $R_f$ acts on the full set $\Sigma^\ast\times \Sigma^\ast$ despite of the fact that $f$ can be partial. 
\begin{proposition}
Let $f : \Sigma^\ast \rightarrow M$ be a function. Then the Myhill-Nerode relation for $f$
is a right invariant equivalence relation. 
\end{proposition}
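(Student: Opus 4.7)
I would verify reflexivity, symmetry, right invariance, and transitivity in turn; only the last uses the sequentiable-structure assumptions in a substantial way. Reflexivity holds via $u'=v'=e$, symmetry by exchanging the witnesses $u'$ and $v'$, and right invariance is immediate: if $(u',v')$ witnesses $\tuple{u,v}\in R_f$, then the same pair witnesses $\tuple{uw,vw}\in R_f$ for every $w\in\Sigma^*$, because the defining clauses of the latter arise from those of the former by restricting the universal quantifier to continuations of the form $wz$.

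For transitivity, let $\tuple{u,v}\in R_f$ be witnessed by $(u_1,v_1)$ and $\tuple{v,w}\in R_f$ by $(v_2,w_2)$. The domain biconditional for $(u,w)$ chains trivially through $v$. For the output condition, if $v\Sigma^*\cap\Dom(f)=\emptyset$ the requirement is vacuous and $(e,e)$ works; otherwise fix some $z_0$ with $vz_0\in\Dom(f)$. Both $v_1$ and $v_2$ are then left-divisors of $f(vz_0)$, and without loss of generality $\|v_1\|\le\|v_2\|$ (the other case is symmetric). Applying Proposition~\ref{RightCancellation}.1 to $v_1\le_M f(vz_0) = v_2\circ(v_2^{-1}f(vz_0))$ yields $v_1\le_M v_2$, so $v_2 = v_1\circ s$ for some $s\in M$. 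I would then claim that $(u_1\circ s,\,w_2)$ witnesses $\tuple{u,w}\in R_f$: for every $z$ with $uz\in\Dom(f)$, the hypotheses give $u_1^{-1}f(uz)=v_1^{-1}f(vz)$, whence $f(vz)=v_1\circ(u_1^{-1}f(uz))$, and combining with $v_2=v_1\circ s$ plus left cancellation one gets
\[
w_2^{-1}f(wz) \;=\; v_2^{-1}f(vz) \;=\; s^{-1}\,u_1^{-1}f(uz) \;=\; (u_1\circ s)^{-1}f(uz),
\]
which is the desired equation; the prefix condition $u_1\circ s\le_M f(uz)$ follows from $s\le_M v_1^{-1}f(vz) = u_1^{-1}f(uz)$.

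The only real obstacle is reconciling the two independently chosen middle offsets $v_1$ and $v_2$. Once Proposition~\ref{RightCancellation}.1 is used to force the shorter one to be a left divisor of the longer, the rest of the verification is a mechanical manipulation of left cancellation, and no further case analysis beyond the symmetric swap $\|v_2\|<\|v_1\|$ is needed.
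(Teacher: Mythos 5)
Your proof is correct. For right invariance you give exactly the paper's argument: the witness pair for $u\ R_f\ v$ is reused verbatim for $uz\ R_f\ vz$, since every clause of the latter is a clause of the former specialised to continuations of the form $zw$. Where you genuinely diverge is on the equivalence-relation part, which the paper disposes of with the single word ``clearly'': you observe, correctly, that transitivity is the one axiom that is not formally immediate, because the two witness pairs $(u_1,v_1)$ and $(v_2,w_2)$ attach independently chosen offsets to the middle word $v$, and reconciling them requires knowing that two left divisors of a common value $f(vz_0)$ are themselves comparable. Your appeal to the first point of Proposition~\ref{RightCancellation} (and hence to the sequentiable-structure axioms, not merely to left cancellation) to force $v_1\le_M v_2$ when $\|v_1\|\le\|v_2\|$, followed by the left-cancellation bookkeeping showing that $(u_1\circ s, w_2)$ is a valid witness pair --- including the check that $u_1\circ s$ really is a left divisor of $f(uz)$ --- is sound, and the vacuous case $v\Sigma^*\cap\Dom(f)=\emptyset$ is handled. (You do rely on the tacit convention that a witness $u'$ must be a left divisor of $f(uw)$ for the quotient notation to make sense, but the paper uses the same convention throughout.) The net effect is that your proof is more complete than the paper's: it makes explicit that transitivity of $R_f$ is not a purely formal consequence of the definition but leans on the structure of ${\cal M}$, a dependency the paper's one-line dismissal conceals.
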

\begin{proof}
Clearly $R_f$ is an equivalence relation.
 Let $u\ R_f\ v$ and $z \in \Sigma^\ast$. In order to prove that  $u \cdot z \ R_f\ v \cdot z$ we have to show that there exist $u', v' \in M$ such that for any $w \in \Sigma^\ast$ we have (a) $(u \cdot z) \cdot w \in \Dom(f)$ iff $(v \cdot z) \cdot w \in \Dom(f)$, and (b) if $(u \cdot z) \cdot w \in \Dom(f)$, then $u'^{-1} f((u \cdot z) \cdot w) = v'^{-1} f((v \cdot z) \cdot w)$. Let $w \in \Sigma^\ast$. Since $u\ R_f\ v$ there exist $u', v' \in M$ such that for $w' = z \cdot w$ we have (a) $u \cdot w' = (u \cdot z) \cdot w \in \Dom(f)$ iff $v \cdot w' = (v \cdot z) \cdot w \in \Dom(f)$, and (b) if  $u \cdot w' = (u \cdot z) \cdot w \in \Dom(f)$, then  $u'^{-1} f(u \cdot w') = u'^{-1} f((u \cdot z) \cdot w) =  v'^{-1} f(v \cdot w') = v'^{-1} f((v \cdot z) \cdot w)$.\qed
\end{proof}
%

\section{Generalised Myhill-Nerode Characterisation}
In this section we prove that the converse is also true. That is we have the following theorem:
\begin{theorem}\label{th:Nerode-Myhill}
Let $\equiv_{R_f}$ have index $n\in \mathbb{N}$. Then
there is a subsequential transducer with initial output, ${\cal T}$, 
with $n$ states such that $O_{\cal T}=f$.
\end{theorem}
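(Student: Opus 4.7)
The plan is to take as states the $n$ equivalence classes of $\equiv_{R_f}$, with $q_0 := [\varepsilon]$ and $F := \{[u] : u \in \Dom(f)\}$ (well-defined: evaluating the Myhill-Nerode condition at $w = \varepsilon$ forces $u \equiv_{R_f} v$ to imply $u \in \Dom(f) \iff v \in \Dom(f)$). Transitions $\delta([u], a) := [ua]$, defined whenever $ua$ has some extension in $\Dom(f)$, are well-defined by right invariance of $R_f$. So the state set has the required cardinality $n$.

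For the output data, I would assign to every $v \in \Sigma^*$ with nonempty future the element $P_v := \bigmaxpref\{f(vw) : vw \in \Dom(f)\}$ and set $\iota := P_\varepsilon$, $\lambda([v], a) := P_v^{-1} P_{va}$, and $\Psi([v]) := P_v^{-1} f(v)$ for $[v] \in F$. Assuming these are well-defined, a straightforward induction on $|v|$ gives $\iota \cdot \lambda^*(q_0, v) = P_v$ via the telescoping $P_v \cdot P_v^{-1} P_{va} = P_{va}$, whence $O_{\cal T}(v) = P_v \cdot P_v^{-1} f(v) = f(v)$ for every $v \in \Dom(f)$.

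Two things must then be verified. First, class-invariance: if $u \equiv_{R_f} v$ with witnesses $u', v'$, put $h(w) := u'^{-1} f(uw) = v'^{-1} f(vw)$, so $f(uw) = u' \cdot h(w)$ and $f(vw) = v' \cdot h(w)$. Left-distributivity of multiplication over the meet (which follows from Proposition~\ref{RightCancellation} together with the fact, derivable from the sequentiable axiom~(\ref{Monotone}), that prefixes of a common element form a chain in $\le_M$) then yields $P_u = u' \cdot p$ and $P_v = v' \cdot p$ with $p := \bigmaxpref_w h(w)$, so that $P_u^{-1} f(uw) = p^{-1} h(w) = P_v^{-1} f(vw)$; the analogous identities for $P_u^{-1} P_{ua}$ (take $w = aw'$) and $P_u^{-1} f(u)$ (take $w = \varepsilon$) follow by the same calculation, so $\lambda$ and $\Psi$ only depend on the class.

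The main obstacle is the second point: existence of the a priori infinite meet $P_v$. The plan is to reduce it to the finite meet $P_v^{<n} := \bigmaxpref\{f(vw') : |w'| < n,\ vw' \in \Dom(f)\}$. For any $w$ with $|w| \ge n$ and $vw \in \Dom(f)$, the sequence of classes $[v], [vw_1], \ldots, [vw]$ must repeat, so the intervening loop can be excised to give a strictly shorter $w'$ with $[vw'] = [vw]$ and (by the MN-condition at $\varepsilon$) $vw' \in \Dom(f)$; iterating produces some $w^*$ of length $< n$. The delicate step is then to transfer the obvious bound $P_v^{<n} \le_M f(vw^*)$ to $P_v^{<n} \le_M f(vw)$ along the MN-witnesses of $vw \equiv_{R_f} vw^*$. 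This transfer is where Proposition~\ref{RightCancellation} and the chain property of prefixes of a common element---and thus sequentiability, beyond mere meet semi-lattice structure---are essential; this is the real technical heart of the proof.
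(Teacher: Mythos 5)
Your construction is genuinely different from the paper's: you aim at the canonical, maximally-pushed-back outputs $P_v^{-1}P_{va}$ built from the (a priori infinite) meet $P_v$ of all $f(vw)$, whereas the paper never forms an infinite meet at all. Instead it works with finitely many candidate ``suffix functions'' $s_j(z\cdot{-})$ with $|z|<n$, equips the class graph with real weights $\|s_i(aw_j)\|-\|s_j(w_j)\|$, proves this weighted automaton has no negative cycles (via Lemma~\ref{lex_min}), and selects for each class the candidate of minimal norm as $\widehat{s}_i$. Your route, if completed, would buy something extra (a canonical output assignment in the spirit of Mohri), but it also has to pay for more: in a general sequentiable structure only \emph{binary} meets are guaranteed, so the existence of $P_v$ is not free, and descending chains of prefixes need not stabilize a priori (e.g.\ $\mathbb{Q}^+$ with addition is sequentiable and has bounded descending chains with no infimum).

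The genuine gap is exactly the step you flag as ``the real technical heart'': the transfer of $P_v^{<n}\le_M f(vw^*)$ to $P_v^{<n}\le_M f(vw)$. As stated it does not go through by any routine use of the Myhill--Nerode witnesses. Excising a loop $\ell$ from $w=w_1\ell w_2$ and writing $f(vw_1t)=m_1s(t)$, $f(vw_1\ell t)=m_2s(t)$, Lemma~\ref{lex_min} gives $m_2=m_1c$, hence $f(vw)=m_1c\,s(w_2)$ and $f(vw^*)=m_1s(w_2)$; but $m_1s(w_2)\le_M m_1c\,s(w_2)$ is false in general (already in a free monoid $a\not\le_M ba$), so $f(vw^*)\le_M f(vw)$ fails and transitivity from $P_v^{<n}\le_M f(vw^*)$ is unavailable. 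What you actually need is that $P_v^{<n}$ divides the common prefix $m_1$ contributed before the loop, and establishing that is precisely the content of the paper's norm-telescoping argument (Lemma~\ref{lex_min} summed over iterates of the loop, giving $\|m_1\|\le\|m_2\|$, then Proposition~\ref{RightCancellation}) together with the no-negative-cycle corollary for $G$. Until that argument is supplied, the identity $P_v=P_v^{<n}$ --- and with it the well-definedness of $\iota$, $\lambda$ and $\Psi$ --- is unproven, so the proposal is a plausible plan rather than a proof. The class-invariance step, by contrast, is fine: left distributivity of $\circ$ over $\maxpref$ follows from left cancellation, and your chain argument for prefixes of a common element is a correct use of condition~(\ref{Monotone}).
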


First we note some simple properties of the relation $\equiv_{R_f}$.
\begin{definition}
Given a function $f:\Sigma^*\rightarrow M$ we say that $\alpha\in \Sigma^*$ is $f$-essential 
if there is some $w$ s.t. $f(\alpha w)$ is defined.
\end{definition}

\begin{definition}
We say that a triple $\tuple{m_{\alpha},m_{\beta},s}$ 
where $m_{\alpha}, m_{\beta}\in M$, and $s:\Sigma^*\rightarrow M$ is a partial function
witnesses for $\alpha\equiv_{R_f} \beta$ if and only if:
\begin{enumerate}
\item $\Dom(s) = \{ w\,|\, \alpha w\in \Dom(f)\}$, and
\item for each $w\in \Sigma^*$ s.t. $\alpha w\in \Dom(f)$:
\begin{equation*}
f(\alpha \gamma)=u_{\alpha}s(w) \text{ and } f(\beta\gamma)=u_{\beta}s(w)
\end{equation*}
\end{enumerate}
\end{definition}

\begin{lemma}\label{lemma:Nerode-Myhill}
Let $C_i=[\alpha_i]_{\equiv_{R_f}}$ for
$i=1,\dots, n$ be all the equivalence classes of $\equiv_{R_f}$. Then
there are partial functions $\widehat{s}_i:\Sigma^* \rightarrow M$ with the following two properties:
\begin{enumerate} 
\item for any $\beta\in C_i$ there is a witness $\tuple{m_1,m_2,\widehat{s}_i}$ for $\alpha_i\equiv_{R_f} \beta$.
\item for any $i,j\le n$ and character $a\in \Sigma$ with $\alpha_i a \equiv_{R_f} \alpha_j$ there is a monoid element
$m_{i,a,j}\in M$ with:
\begin{equation*}
\widehat{s}_i(aw) =m_{i,a,j}\widehat{s}_j (w) \text{ whenever } \alpha_j w \in \Dom(f).
\end{equation*}
\end{enumerate}
\end{lemma}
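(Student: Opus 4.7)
My plan is to define, for each $f$-essential representative $\alpha_i$, the ``normalised'' residual
\[
  \widehat{s}_i(w) \;:=\; p_i^{-1}\,f(\alpha_i w), \qquad \Dom(\widehat{s}_i) = \set{w}{\alpha_i w \in \Dom(f)},
\]
where $p_i := \bigmaxpref \set{f(\alpha_i w)}{\alpha_i w \in \Dom(f)}$ is the greatest common $\le_M$-prefix of all outputs reachable from $\alpha_i$ (and $\widehat{s}_i := \emptyset$ when $\alpha_i$ is not $f$-essential). The hardest step will be to justify that this (possibly infinite) meet exists in ${\cal M}$: I would first bound the norms of all common prefixes by $\|f(\alpha_i w_0)\|$ for a fixed witness $w_0$, then build a decreasing chain of finite meets whose norms approach the infimum, and invoke the sequentiability axiom~(\ref{Monotone}) together with Proposition~\ref{RightCancellation}.1 to argue the chain stabilises to an element of $M$ that is the desired greatest common prefix. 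Once $p_i$ is in hand, the rest of the lemma becomes a matter of rewriting.

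For Property~1, fix $\beta \in C_i$. The defining property of $R_f$ supplies $u, v \in M$ with $u^{-1} f(\alpha_i w) = v^{-1} f(\beta w)$ for every $w$ with $\alpha_i w \in \Dom(f)$. Since $u$ is then a common prefix of every such $f(\alpha_i w)$, it is bounded above by $p_i$; writing $p_i = u c$ and applying left cancellation I obtain $\widehat{s}_i(w) = c^{-1} u^{-1} f(\alpha_i w) = c^{-1} v^{-1} f(\beta w) = (vc)^{-1} f(\beta w)$, hence $f(\alpha_i w) = p_i\,\widehat{s}_i(w)$ and $f(\beta w) = (vc)\,\widehat{s}_i(w)$; thus $\tuple{p_i, vc, \widehat{s}_i}$ is the required witness.

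For Property~2, assume $\alpha_i a \equiv_{R_f} \alpha_j$ and apply Property~1 with $\beta = \alpha_i a \in C_j$ to obtain $m_1, m_2 \in M$ with $f(\alpha_j w) = m_1 \widehat{s}_j(w)$ and $f(\alpha_i a w) = m_2 \widehat{s}_j(w)$ whenever $\alpha_j w \in \Dom(f)$. The defining identity $f(\alpha_i a w) = p_i \widehat{s}_i(aw)$ then gives $p_i \widehat{s}_i(aw) = m_2 \widehat{s}_j(w)$, so it remains to verify $p_i \le_M m_2$ (whereupon $m_{i,a,j} := p_i^{-1} m_2$ is delivered by a single left cancellation). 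Every $m_2 \widehat{s}_j(w) = f(\alpha_i a w)$ lies in the set whose meet is $p_i$, so $p_i \le_M m_2 \widehat{s}_j(w)$ for every admissible $w$. Combining this with the distributivity $m(x\maxpref y) = (mx)\maxpref(my)$ (an easy consequence of left cancellation and Proposition~\ref{RightCancellation}.1, extending to the meets at hand) and the built-in identity $\bigmaxpref_w \widehat{s}_j(w) = p_j^{-1} \bigmaxpref_w f(\alpha_j w) = p_j^{-1} p_j = e$, I conclude $p_i \le_M \bigmaxpref_w m_2 \widehat{s}_j(w) = m_2 \cdot e = m_2$, as required.
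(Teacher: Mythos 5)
There is a genuine gap, and it sits exactly where you flag ``the hardest step'': the existence of the infinite meet $p_i=\bigmaxpref_w f(\alpha_i w)$. A sequentiable structure only guarantees \emph{binary} (hence finite) meets with respect to $\le_M$; nothing in the axioms makes $\tuple{M,\le_M}$ complete. Your stabilisation sketch does not close this: a $\le_M$-decreasing chain of finite meets whose norms converge to an infimum need not stabilise, and the infimum need not be realised by any element of $M$. Concretely, the non-negative rationals under addition (with the identity norm) form a sequentiable structure --- condition~(\ref{Monotone}) is vacuous there --- yet a bounded decreasing chain of rationals can converge to an irrational and thus have no greatest lower bound in the structure. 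Neither axiom~(\ref{Monotone}) (which only speaks about pairs with trivial meet) nor Proposition~\ref{RightCancellation}.1 forces termination of such a chain. So to make your construction work you would have to prove that, \emph{for a function of finite index}, the specific sets $\set{f(\alpha_i w)}{\alpha_i w\in\Dom(f)}$ do admit greatest common left divisors --- and that is essentially the entire difficulty of the lemma, not a preliminary. (Granting existence of $p_i$, your Properties~1 and~2 do go through, though in Property~2 the reduction from $p_i\le_M m_2\widehat{s}_j(w)$ for all $w$ to $p_i\le_M m_2$ is cleaner via Proposition~\ref{RightCancellation} and $\bigmaxpref_w\widehat{s}_j(w)=e$ than via an infinite distributivity law, which itself presupposes completeness.)

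The paper avoids the issue entirely: it never forms the greatest common prefix. Instead it works with the finite truncations $A_i=C_i\cap\Sigma^{\le 2n}$, picks for each class a witness function $s_i$ whose left factor has \emph{maximal norm} among the finitely many candidates (Lemma~\ref{finite_case}), and then builds the $\bbbr$-weighted automaton $G$ whose arc weights are norm differences $\|s_i(aw_j)\|-\|s_j(w_j)\|$. The no-negative-cycle corollary (proved by pumping, via Lemma~\ref{lex_min}) guarantees that minimising $\lambda^*_R(C_j,z)$ over all $(j,z)$ with $\delta^*_R(C_j,z)=C_i$ is attained at some $|z_i|<n$, and $\widehat{s}_i(w):=s_{j_i}(z_iw)$ is then a residual that is merely \emph{small enough} (its norm is minimal among finitely many candidates) rather than canonical. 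Your plan is the classical Mohri/Choffrut construction for free monoids, where the longest common prefix always exists; transporting it to sequentiable structures is precisely what requires the finite combinatorial detour the paper takes.
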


First we show how the Nerode-Myhill-like Theorem~\ref{th:Nerode-Myhill} follows from Lemma~\ref{lemma:Nerode-Myhill}:
\begin{proof}[ of Theorem~\ref{th:Nerode-Myhill}]
Let $C_i=[\alpha_i]_{\equiv_{R_f}}$ for $i=1,\dots,n$ be all the equivalence classes of $\equiv_{R_f}$.
Let $\widehat{s}_i$ and $m_{i,a,j}$ satisfy the conclusion of Lemma~\ref{lemma:Nerode-Myhill}. W.l.o.g. we assume that $C_1=[\varepsilon]_{\equiv_{R_f}}$. If $\Dom(f)=\emptyset$ then clearly $n=1$ and one can construct a trivial subsequential transducer representing $f$. 

Alternatively, let $w\in \Dom(f)$. By the properties of $\widehat{s}_1$ there is a witness $\tuple{m_1,m_{\varepsilon},{\widehat{s}_1}}$ for $\alpha_1\equiv_{R_f} \varepsilon$. We define ${\cal T}$ as:
\begin{equation*}
{\cal T} = \tuple{\Sigma\times M,\{C_i\}_{i=1}^n,C_1,F,\delta,\lambda,m_{\varepsilon},\Psi}, \text{ where }
\end{equation*}
\begin{eqnarray*}
  \quad F & =  &\{C_i \,|\, \varepsilon \in \Dom(s_i)\} \\
  \psi(C_i) &=&  \widehat{s}_i(\varepsilon)\\
\delta(C_i,a) & = & C_j \iff \alpha_i a\equiv_{R_f} \alpha_j \\
\lambda(C_i, a) & = & m_{i,a,j} \iff \delta(C_i,a)=C_j 
\end{eqnarray*}
Since $\equiv_{R_f}$ is right-invariant it is obvious that $\alpha\in C_i$ iff $\delta^*(C_1,\alpha)=C_i$. Since, $C_i \subseteq \Dom(f)$ if and only if $s_i(\varepsilon)$ is defined, we conclude that $\Dom(O_{\cal T})=\Dom(f)$. It remains to show that for any $w\in \Dom(f)$, $O_{\cal T}(w)=f(w)$. Let $w=uv$ be an arbitrary decomposition of $w$ into two words, $u,v\in \Sigma^*$. Let $C_i=\delta^*(C_1,u)$. We prove that:
\begin{equation*}
f(w) = m_{\varepsilon} \lambda^*(C_1,u) \widehat{s}_i(v)
\end{equation*}
by induction on $|u|$. For $|u|=0$, we have that $\tuple{m_1,m_{\varepsilon},{\widehat{s}_1}}$ is a witness for $\alpha_1\equiv_{R_f} \varepsilon$. Since $w\in \Dom(f)$ we conclude that $f(w) =m_{\varepsilon} \widehat{s}_1(w)=m_{\varepsilon} \widehat{s}_1(w)$.
For the induction step, let $w=uv=u a v'$ where $a\in \Sigma$. Let $C_i=\delta^*(C_1,u)$ and $C_j=\delta(C_i,a)$.
By the induction hypothesis we have:
\begin{equation*}
f(w) =m_{\varepsilon} \lambda^*(C_1,u) \widehat{s}_i(v)=m_{\varepsilon} \lambda^*(C_1,u) \widehat{s}_i(av')=m_{\varepsilon} \lambda^*(C_1,u)m_{i,a,j} \widehat{s}_j(v'),
\end{equation*}  
where the last equality follows by the properties of $\widehat{s}_i$, $\widehat{s}_j$, and $m_{i,a,j}$. The last equality is equivalent to:
\begin{equation*}
f(w) = m_{\varepsilon} \lambda^*(C_1,u)m_{i,a,j} \widehat{s}_j(v') =m_{\varepsilon} \lambda^*(C_1,ua) \widehat{s}_j(v').
\end{equation*}
This concludes the induction. In the special case where $v=\varepsilon$ we obtain:
\begin{equation*}
f(w) = m_{\varepsilon} \lambda^*(C_1,w) \widehat{s}_i(\varepsilon)=m_{\varepsilon} \lambda^*(C_1,w) \Psi(C_i) = O_{\cal T}(w).
\end{equation*}
where $C_i=\delta^*(C_1,w)$. Therefore $O_{\cal T}=f$ as required. \qed
\end{proof}
In the sequel we prove Lemma~\ref{lemma:Nerode-Myhill}.
The proof below assumes that all $\alpha_i$ are $f$-essential. However, it can be easily amended to handle the general case.
We add a remark on this after the proof.

We start with the following simple lemma:
\begin{lemma}\label{finite_case}
For each $i\le n$, there is a partial function $s_i:\Sigma^*\rightarrow M$ and an element $m_i''\in M$
such that for every $\beta\in A_i$ there is a witness $\tuple{m_{\beta},m_i'',s_i}$ for $\beta\equiv_{R_f} \alpha_i$.
\end{lemma}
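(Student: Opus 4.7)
The plan is to extract for each $\beta\in A_i$ a default witness directly from the definition of $\equiv_{R_f}$, designate one element $\beta^*\in A_i$ whose ``middle component'' has the largest norm, and then use the Levy-like property from Proposition~\ref{RightCancellation}(3) to re-align every other witness to the one belonging to $\beta^*$.

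First, for each $\beta\in A_i$ the definition of $\beta\equiv_{R_f}\alpha_i$ supplies $u_\beta,v_\beta\in M$ such that $u_\beta\le_M f(\beta w)$ and $v_\beta\le_M f(\alpha_i w)$ whenever $\alpha_i w\in\Dom(f)$, with the residues $u_\beta^{-1}f(\beta w)$ and $v_\beta^{-1}f(\alpha_i w)$ equal. By left cancellation we may define a single partial function $s^0_{i,\beta}$ on $\{w\mid\alpha_i w\in\Dom(f)\}$ through the equations $v_\beta\,s^0_{i,\beta}(w)=f(\alpha_i w)$ and $u_\beta\,s^0_{i,\beta}(w)=f(\beta w)$. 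Writing $m^0_\beta:=u_\beta$ and $m^0_{i,\beta}:=v_\beta$, the triple $\tuple{m^0_\beta,m^0_{i,\beta},s^0_{i,\beta}}$ is a witness for $\beta\equiv_{R_f}\alpha_i$.

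Since $A_i$ is finite, choose $\beta^*\in A_i$ maximising $\|m^0_{i,\beta^*}\|$ and set $m''_i:=m^0_{i,\beta^*}$, $s_i:=s^0_{i,\beta^*}$. For $\beta=\beta^*$ the triple $\tuple{m^0_{\beta^*},m''_i,s_i}$ is already a witness. For a general $\beta\in A_i$ and any $w$ in the common domain we have
\begin{equation*}
m''_i\,s_i(w)\;=\;f(\alpha_i w)\;=\;m^0_{i,\beta}\,s^0_{i,\beta}(w),
\end{equation*}
and $\|m''_i\|\ge\|m^0_{i,\beta}\|$ by maximality. Proposition~\ref{RightCancellation}(3) then supplies $c\in M$ with $m^0_{i,\beta}\,c=m''_i$ and $c\,s_i(w)=s^0_{i,\beta}(w)$. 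The first equation determines $c$ from $m''_i$ and $m^0_{i,\beta}$ alone (via left cancellation), so the same $c$ serves every admissible $w$; substituting gives $f(\beta w)=m^0_\beta\,s^0_{i,\beta}(w)=(m^0_\beta c)\,s_i(w)$, so $m_\beta:=m^0_\beta c$ completes the witness $\tuple{m_\beta,m''_i,s_i}$.

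The main obstacle is the $w$-uniformity of $c$: Proposition~\ref{RightCancellation}(3) is stated for a single factorisation equation, whereas we need the resulting $c$ to coordinate all $w$ simultaneously. This is handled by noting that $c$ is pinned down by $m^0_{i,\beta}\,c=m''_i$, which is $w$-independent, and that a second application of left cancellation to the displayed identity propagates $c\,s_i(w)=s^0_{i,\beta}(w)$ to every $w$ in the domain of $s_i$.
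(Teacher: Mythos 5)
Your proof is correct and follows essentially the same route as the paper's: pick the representative $\beta^*\in A_i$ whose second witness component has maximal norm (possible since $A_i$ is finite), use the norm comparison together with Proposition~\ref{RightCancellation} to factor $m''_i=m^0_{i,\beta}c$, and transport each witness onto $s_i$ via $m_\beta:=m^0_\beta c$. Your explicit remark that $c$ is pinned down by the $w$-independent equation $m^0_{i,\beta}c=m''_i$ (so the same $c$ works for all $w$) is a detail the paper leaves implicit, but it is the same argument.
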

\begin{proof}
For each $\beta\in A_i$ there is a witness $\tuple{m'_{\beta},m_{\beta}'',s_{\beta}}$ for $\beta\equiv_{R_f} \alpha_i$. Let $\beta_i\in A_i$ be such that:
\begin{equation*}
\|m_{\beta_i}'' \| = \max_{\beta\in A_i} m_{\beta}''. 
\end{equation*}
Since $A_i$ is finite, $\beta_i$ exists. We set $s_i=s_{\beta_i}$ and $m_i''=m_{\beta_i}''$.
Now we have that for every $w$ such that $\alpha_i w\in \Dom(f)$:
\begin{equation*}
f(\alpha_i w) = m_{\beta}'' s_{\beta}''(w)= m_{\beta_i}'' s_{\beta_i}(w) =m_i'' s_i(w).
\end{equation*}
The last equality, along with $\|m_{\beta}''\| \le \|m_i''\|$ shows that $m''_{\beta}\le_M m_i''$. Hence,
$m_i''=m_{\beta} b_{\beta}$ for some $b_{\beta}\in M$. Therefore $s_{\beta}''(w)=b_{\beta} s_{i}(w)$ for all $\alpha_iw \in \Dom(f)$.
Consequently for each $\beta\in A_i$:
\begin{equation*}
f(\beta w)=m'_{\beta} s_{\beta}(w)=m'_{\beta}b_{\beta} s_i(w).
\end{equation*} 
This shows that $\tuple{m'_{\beta}b_{\beta},m''_i,s_i}$ is a witness for $\beta\equiv_{R_f} \alpha_i$ for all $\beta\in A_i$.
\end{proof}

Next, we proceed stepwise to define the required functions, $\widehat{s}_i$, for all the equivalence classes $C_i$. 
The idea is to start with big enough, but finite, subsets of each of the classes and greedily define a uniform witness-function 
$s_i$ for the selected finite part of $C_i$. Next, we shall use the right invariance of the equivalence relation, $R_f$, and consider
the functions $s_i$ in ensemble. Finally, using the properties of the norm and Lemma~\ref{finite_case}, we shall prove that
appropriate $\widehat{s}_i$ can be defined as $\widehat{s}_i(w)=s_j(z_i w)$.

We start by noting that since $\equiv_{R_f}$ is right-invariant and has index $n$, 
each equivalence class $C_i$ contains a representative of length less than $n$.
Indeed, consider the deterministic automaton with states $C_i$ and transitions
 $\delta(C_i,a)=C_j$ if and only if $C_i a\subseteq C_j$. Clearly, $\alpha\in C_i$
 if and only if:
 \begin{equation*}
 \delta^*([\varepsilon]_{\equiv_{R_f}},\alpha)=[\alpha]_{\equiv_{R_f}}.
 \end{equation*}
Since the sets $C_i$ are nonempty, any $C_i$ can be reached from $[\varepsilon]_{\equiv_{R_f}}$
via a simple path, in particular, it contains a word of length less than $n$.

Thus, we can and we do assume that $\alpha_i\in C_i$ are representatives of $C_i$ with $|\alpha_i|<n$ for each $i\le n$.
Let:
\begin{equation*}
A_i = C_i \cap \Sigma^{\le 2n}.
\end{equation*}
In particular, $\alpha_i \in A_i$. 
\begin{lemma}\label{lex_min}
Consider words $\alpha,\beta\in \Sigma^*$ s.t. $\alpha$ is $f$-essential and 
$\alpha\equiv_{f} \alpha\beta$. If $\tuple{m_1,m_2,s}$ witnesses for
$\alpha\equiv_{f} \alpha\beta$ then $m_1\le_M m_2$.
\end{lemma}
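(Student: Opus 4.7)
The plan is to exploit the fact that, by right-invariance of $\equiv_{R_f}$, the hypothesis $\alpha \equiv_{R_f} \alpha\beta$ propagates to $\alpha \equiv_{R_f} \alpha\beta^k$ for every $k \ge 0$, so that the single witness $\tuple{m_1, m_2, s}$ forces infinitely many relations among the values of $s$ along the iterates $\beta^k w_0$, where $w_0$ is any word making $\alpha$ essential. Passing through the norm homomorphism turns these relations into an arithmetic progression whose non-negativity forces $\|m_1\| \le \|m_2\|$. Once this norm inequality is in hand, Proposition~\ref{RightCancellation}(1) immediately yields $m_1 \le_M m_2$.

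Concretely, I first fix some $w_0$ with $\alpha w_0 \in \Dom(f)$, which exists because $\alpha$ is $f$-essential. Iterating the Myhill--Nerode equivalence $\alpha w \in \Dom(f) \Leftrightarrow \alpha\beta w \in \Dom(f)$, I obtain $\alpha \beta^k w_0 \in \Dom(f)$ for all $k \ge 0$, so $s(\beta^k w_0)$ is defined. Plugging $\beta^k w_0$ for $w$ in the witness property gives
\begin{equation*}
f(\alpha \beta^k w_0) = m_1 s(\beta^k w_0), \qquad f(\alpha \beta^{k+1} w_0) = m_2 s(\beta^k w_0),
\end{equation*}
while another application of the witness at $\beta^{k+1} w_0$ yields $f(\alpha \beta^{k+1} w_0) = m_1 s(\beta^{k+1} w_0)$. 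Equating the two expressions for $f(\alpha\beta^{k+1}w_0)$ gives
\begin{equation*}
m_1 s(\beta^{k+1} w_0) = m_2 s(\beta^k w_0),
\end{equation*}
and taking norms produces $\|s(\beta^{k+1} w_0)\| - \|s(\beta^k w_0)\| = \|m_2\| - \|m_1\|$.

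Now, if $\|m_1\| > \|m_2\|$, this is an arithmetic sequence with a fixed negative common difference, which after finitely many steps would push $\|s(\beta^k w_0)\|$ strictly below $0$, contradicting that the norm takes values in $\bbbr^+$. Hence $\|m_1\| \le \|m_2\|$. The instance $k=0$ reads $m_1 s(\beta w_0) = m_2 s(w_0)$, so $m_1 \le_M m_2 s(w_0)$, and Proposition~\ref{RightCancellation}(1) converts this into $m_1 \le_M m_2$. The main difficulty is spotting the right invariant to track; once one sees that the iterates $\|s(\beta^k w_0)\|$ must form an arithmetic progression within $\bbbr^+$, positivity of the norm does the work and the remainder is a direct appeal to the right-cancellation machinery already established.
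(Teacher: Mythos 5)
Your proof is correct and follows essentially the same route as the paper's: iterate the equivalence to get $\alpha\beta^k w_0\in\Dom(f)$, derive the telescoping norm identity $\|s(\beta^{k+1}w_0)\|-\|s(\beta^k w_0)\|=\|m_2\|-\|m_1\|$, use non-negativity of the norm to conclude $\|m_1\|\le\|m_2\|$, and then upgrade this to $m_1\le_M m_2$ via the equation $m_1 s(\beta w_0)=m_2 s(w_0)$. The only cosmetic difference is that you finish with Proposition~\ref{RightCancellation}(1) where the paper invokes the prefix dichotomy from Proposition~\ref{RightCancellation}(3); both are equally valid.
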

\begin{proof}
Since $\alpha$ is $f$-essential there is some $w$ s.t. $f(\alpha w)$ is defined. Since
$\equiv_{R_f}$ is right-invariant and $\alpha\equiv_{R_f} \alpha\beta$ we conclude that for every integer $n$
$\alpha\equiv_{R_f} \alpha\beta^n$. In particular $\alpha\beta^n w\in \Dom(f)$ for each $n$.

Since $\tuple{m_1,m_2,s}$ is a witness for $\alpha\equiv_{R_f} \alpha\beta$ we deduce that for each $n>0$:
\begin{equation*}
f(\alpha\beta^n w)=m_1 s(\beta^nw) \text{ and } f(\alpha\beta^{n}w)=m_2 s(\beta^{n-1}w).
\end{equation*}
Therefore, by the properties of the norm, we get that for each $n>0$:
\begin{equation*}
\| m_1 \| + \| s(\beta^nw)\|=\| m_2 \| + \| s(\beta^{n-1}w)\|.
\end{equation*}
Summing up this equalities for $n=1,2,\dots, N$ we obtain:
\begin{equation*}
N \|m_1\| + \|s(\beta^Nw)\| = N \|m_2\| + \|s(w)\|.
\end{equation*}
In particular,
\begin{equation*}
N (\|m_1\|-\|m_2\|) + \|s(\beta^Nw)\| = \|s(w)\|.
\end{equation*}
Since $\|s(\beta^Nw)\|\ge 0$ for each $N$ and $\|s(w)\|$ is constant we must have that
$\|m_1\|\le \|m_2\|$. Finally, since:
\begin{equation*}
m_1 s(\beta w)=m_2 s(w)
\end{equation*}
it must be that either $m_1\le_M m_2$, or $m_2<_M m_1$. However the latter is incompatible with 
$\|m_1\|\le  \|m_2\|$. Hence, $m_1\le_M m_2$.\qed
\end{proof}

For each $i\le n$ we pick a word $w_i\in \Dom(s_i)$. Since, we assume that all $\alpha_i$ are $f$-essential such
words exist. Next, we define a subsequential transducer over $\mathbb{R}$, 
$G=\tuple{\Sigma\times \mathbb{R},\{C_i\}_{i=1}^n,C_1,\emptyset,\delta_R,\lambda_R,\emptyset}$, where:
\begin{eqnarray*}
\delta_R(C_i,a) &=& C_j \iff \alpha_i a\equiv_{R_f} \alpha_j \\
\lambda_R(C_i,a) & = & \|s_i(aw_j)\| - \|s_j(w_j)\| \text{ for } \delta_R(C_i,a)=C_j\\
\end{eqnarray*}
The crucial property of $G$ is the following. 
\begin{claim}
Let $i,j\le n$, and $z,w\in \Sigma^*$ be such that $\delta_R^*(C_i,z)=C_j$ and $\alpha_j w\in \Dom(f)$.
Then:
\begin{equation*}
\lambda_R^*(C_i,z) = \|s_i(zw)\| - \|s_j(w)\|.
\end{equation*}
\end{claim}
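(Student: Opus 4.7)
My plan is to proceed by induction on $|z|$. The base case $|z|=0$ forces $j=i$, making both sides of the claim equal to zero: the left by the definition of $\lambda_R^*$ on $\varepsilon$, the right trivially. For the inductive step I would write $z = z' a$ with $a \in \Sigma$, set $C_k = \delta_R^*(C_i, z')$ so that $\delta_R(C_k, a) = C_j$, and then invoke the inductive hypothesis on $z'$ together with the word $aw$. That application is legal because right-invariance of $\equiv_{R_f}$ gives $\alpha_k a \equiv_{R_f} \alpha_j$, which turns $\alpha_j w \in \Dom(f)$ into $\alpha_k (aw) \in \Dom(f)$; the inductive hypothesis then yields $\lambda_R^*(C_i, z') = \|s_i(zw)\| - \|s_k(aw)\|$.

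The key substep, which I expect to be the main obstacle, is a one-character independence claim: whenever $\delta_R(C_k, a) = C_j$, the quantity $\|s_k(aw)\| - \|s_j(w)\|$ is the same for every $w$ with $\alpha_j w \in \Dom(f)$. I would establish it by applying Lemma~\ref{finite_case} twice. Since $|\alpha_k| < n$, the word $\alpha_k a$ lies in $A_j$ (the set $A_i = C_i \cap \Sigma^{\le 2n}$ was chosen precisely wide enough to contain one-character extensions of the representatives), and $\alpha_k$ itself lies in $A_k$. Lemma~\ref{finite_case} thus supplies elements $m_{\alpha_k}, m_{\alpha_k a} \in M$ with $f(\alpha_k v) = m_{\alpha_k} s_k(v)$ for every $\alpha_k v \in \Dom(f)$, and $f(\alpha_k a w) = m_{\alpha_k a} s_j(w)$ for every $\alpha_j w \in \Dom(f)$. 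Specialising the former to $v = aw$ and taking norms in both equations, using that $\|\cdot\|$ is a homomorphism of monoids, yields $\|s_k(aw)\| - \|s_j(w)\| = \|m_{\alpha_k a}\| - \|m_{\alpha_k}\|$, a constant depending only on $k,a,j$. In particular, evaluating at the distinguished $w = w_j$ identifies this constant with the value $\lambda_R(C_k,a)$ used to define the transducer $G$.

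Once this independence claim is in hand, the induction closes in one line: $\lambda_R^*(C_i, z) = \lambda_R^*(C_i, z') + \lambda_R(C_k, a) = (\|s_i(zw)\| - \|s_k(aw)\|) + (\|s_k(aw)\| - \|s_j(w)\|) = \|s_i(zw)\| - \|s_j(w)\|$. Everything else is bookkeeping, so I expect essentially all the work to lie in justifying the $w$-independence via the double application of Lemma~\ref{finite_case} and the homomorphism property of the norm; the reason the finite pool $A_i$ was built with slack up to $2n$ rather than just $n$ is precisely to make this single-transition argument uniform.
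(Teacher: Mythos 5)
Your proposal is correct and follows essentially the same route as the paper: the paper likewise isolates the single-character case $z=a$, uses the $2n$-bound on $A_j$ to place $\alpha_k a$ in $A_j$, invokes Lemma~\ref{finite_case} to get two expressions for $f(\alpha_k a w)$, takes norms to see that $\|s_k(aw)\|-\|s_j(w)\|$ is independent of $w$ and equals $\lambda_R(C_k,a)$ via $w=w_j$, and then closes the induction by telescoping. The only cosmetic difference is that the paper proves the one-letter case up front rather than embedding it as a substep of the induction.
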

\begin{proof}
The statement is obvious if $z=\varepsilon$. Let us consider the special case $z=a\in \Sigma$.
Then, $\alpha_i a\in C_j$ and since $|\alpha_i|<n$ we get that $|\alpha_i a|\le n\le 2n$. Hence
$\alpha_i a\in A_j$. Therefore there is a witness $\tuple{m,m''_j,s_j}$ for $\alpha_i a\equiv_{R_f} \alpha_j$.
Therefore for any $w$ s.t. $\alpha_j w\in \Dom(f)$:
\begin{equation*}
f(\alpha_i a w) = ms_j(w).
\end{equation*}
On the other hand we have that $f(\alpha_i a w)=m_i'' s_i(aw)$. This shows that
for every $\alpha_j w\in \Dom(f)$:
\begin{equation*}
 ms_j(w)= f(\alpha_i a w) = m_i'' s_i(aw).
\end{equation*}
Hence $\|m\| - \|m_j''\| = \|s_i(aw)\| -\|s_j(w)\|$. In the special case where $w=w_j$ we get:
\begin{equation*}
\lambda_R(C_i,a) = \|s_i(aw_j)\| - \|s_j(w_j)\| =\|m\| - \|m_j''\|.
\end{equation*}
Therefore $\lambda_R(C_i,a)=\|s_i(aw)\| -\|s_j(w)\|$ for all $w$ with $\alpha_j w\in \Dom(f)$.
Now the conclusion follows by straightforward induction on the length of $z$. For $z=\varepsilon$ there is nothing
to prove. Let $z=z' a$ and $\delta_R^*(C_i,z')=C_k$ and $\delta_R(C_k,a)=C_j$. If $\alpha_j w\in \Dom(f)$,
then $\alpha_k a w\in \Dom(f)$. Therefore by the induction hypothesis we have:
\begin{equation*}
\lambda_R^*(C_i,z') = \|s_i(z' a w)\| - \| s_k(aw)\|. 
\end{equation*}
Now by the special case we considered above, we get:
\begin{equation*}
\lambda_R(C_k,a) = \|s_k(a w)\| - \|s_j(w)\|.
\end{equation*} 
Summing up we obtain:
\begin{equation*}
\lambda_R^*(C_i,z) = \lambda_R^*(C_i,z')+\lambda_R(C_k,a)=\|s_i(z' a w)\|- \|s_j(w)\|=\|s_i(z w)\|- \|s_j(w)\|
\end{equation*}
as required. \qed
\end{proof}

\begin{corollary}
For each cycle $\delta_R^*(C_i,z)=C_i$ in $G$, $\lambda_R^*(C_i,z)\ge 0$.
\end{corollary}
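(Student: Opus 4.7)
The plan is to invoke Lemma~\ref{lex_min} in the cycle case $j=i$. By the preceding Claim, for any $w$ with $\alpha_i w\in\Dom(f)$ we have
\begin{equation*}
\lambda_R^*(C_i,z)=\|s_i(zw)\|-\|s_i(w)\|,
\end{equation*}
so it suffices to establish $\|s_i(zw)\|\ge \|s_i(w)\|$ for one such $w$. Such a $w$ exists because $\alpha_i$ is $f$-essential.

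The next step is to produce a witness $\tuple{m_i'',M_z,s_i}$ for $\alpha_i\equiv_{R_f} \alpha_i z$; the relation $\alpha_i\equiv_{R_f}\alpha_i z$ holds since $\delta_R^*(C_i,z)=C_i$. The $m_i''$ part comes for free from Lemma~\ref{finite_case}, which asserts $f(\alpha_i w)=m_i''\, s_i(w)$. For the $M_z$ part I would mirror the inductive structure of the proof of the Claim, but track an element of $M$ instead of only its norm: writing $z=a_1\cdots a_k$ and $C_{j_\ell}=\delta_R^*(C_i,a_1\cdots a_\ell)$ with $j_0=j_k=i$, at each step $\ell$ the transition witness $\tuple{m,m_{j_\ell}'',s_{j_\ell}}$ for $\alpha_{j_{\ell-1}}a_\ell\equiv_{R_f}\alpha_{j_\ell}$ yields a uniform identity $f(\alpha_{j_{\ell-1}}a_\ell\, y)=m\cdot s_{j_\ell}(y)$, and composing these identities along the path by left multiplication in $M$ accumulates an element $M_z\in M$ satisfying $f(\alpha_i z w)=M_z\, s_i(w)$ for every $w$ with $\alpha_i w\in\Dom(f)$.

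Once the witness $\tuple{m_i'',M_z,s_i}$ is in hand, Lemma~\ref{lex_min} yields $m_i''\le_M M_z$, and then the homomorphism property of $\|.\|$ gives $\|m_i''\|\le\|M_z\|$. Combining the two expressions $m_i''\, s_i(zw)=f(\alpha_i z w)=M_z\, s_i(w)$ and taking norms produces
\begin{equation*}
\|m_i''\|+\|s_i(zw)\|=\|M_z\|+\|s_i(w)\|,
\end{equation*}
whence $\|s_i(zw)\|-\|s_i(w)\|=\|M_z\|-\|m_i''\|\ge 0$, which is exactly $\lambda_R^*(C_i,z)\ge 0$.

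The main obstacle I anticipate is the inductive construction of $M_z$: carefully threading the per-transition monoid outputs along the cycle into a single element of $M$ that acts uniformly on $s_i(w)$, using left cancellation and the witness structure of Lemma~\ref{finite_case}. The norm arithmetic that closes the argument is then essentially forced by the identities already packaged in the preceding Claim.
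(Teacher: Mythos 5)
Your overall skeleton (obtain a uniform witness $\tuple{m_i'',M_z,s_i}$ for $\alpha_i\equiv_{R_f}\alpha_i z$, apply Lemma~\ref{lex_min}, and close with the norm identity) matches the paper's, but the step you flag as the ``main obstacle'' is a genuine gap, not a technicality. To chain the per-transition identities you need, at step $\ell$, to rewrite $s_{j_{\ell-1}}(a_\ell y)$ as a left multiple $c_\ell\, s_{j_\ell}(y)$. The two identities available are $m_{j_{\ell-1}}''\, s_{j_{\ell-1}}(a_\ell y)=f(\alpha_{j_{\ell-1}}a_\ell y)=m^{(\ell)}\, s_{j_\ell}(y)$, and extracting such a $c_\ell$ via Proposition~\ref{RightCancellation} requires $m_{j_{\ell-1}}''\le_M m^{(\ell)}$, equivalently $\|s_{j_{\ell-1}}(a_\ell y)\|\ge\|s_{j_\ell}(y)\|$, i.e.\ $\lambda_R(C_{j_{\ell-1}},a_\ell)\ge 0$. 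Individual transition weights of $G$ can be negative (only cycles are claimed non-negative), so at a negative-weight transition the factorisation goes the wrong way and $M_z$ cannot be accumulated by left multiplication in a monoid; you would need an inverse, which you do not have. In effect your construction presupposes a pointwise non-negativity that is at least as strong as the statement being proved.

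The paper sidesteps this entirely: it argues by contradiction that a negative cycle would yield a \emph{simple} negative cycle, hence one realised by some $z$ with $|z|\le n$. Then $|\alpha_i z|<2n$, so $\alpha_i z\in A_i$, and Lemma~\ref{finite_case} already supplies the uniform witness $\tuple{m,m_i'',s_i}$ for $\alpha_i z\equiv_{R_f}\alpha_i$ in one shot --- no chaining along the path is needed. From there the argument is exactly your norm computation. If you want to keep your structure, replace the inductive construction of $M_z$ by this reduction to simple cycles (or, alternatively, take any witness $\tuple{m_1,m_2,s}$ for $\alpha_i\equiv_{R_f}\alpha_i z$, which exists by right invariance, and note that the difference $\|s(zw)\|-\|s(w)\|$ is independent of which witness function $s$ is used, since $m_i''s_i(y)=f(\alpha_i y)=m_1 s(y)$ for all relevant $y$); as written, the proof does not go through.
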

\begin{proof}
Assume that the statement were not true. Then, there is a negative cycle in $G$.
Therefore there is also a simple negative cycle in $G$. Since $G$ has $n$ states, there is some $z\in \Sigma^{\le n}$
and $C_i$ such that $\lambda_R^*(C_i,z)<0$. Now we have that $\alpha_i\in C_i$ and $|\alpha_i|<n$.
Hence $\alpha_i z\in C_i$ and $|\alpha_i z|<2n$. Therefore $\alpha_i z\in A_i$. By the construction of $s_i$,
there is a witness $\tuple{m,m_i'',s_i}$ for $\alpha_i z\equiv_{R_f} \alpha$. By Lemma~\ref{lex_min} we have that
$\|m\|\ge \|m_i''\|$. On the other hand for each $w$ with $\alpha_i w\in \Dom(f)$ we have also $\alpha_i z w\in \Dom(f)$ and thus:
\begin{equation*}
m_i'' s_i(zw) = f(\alpha_i z w) = m s_i(w).
\end{equation*}
Hence $\|s_i(zw)\|\ge \|s_i(w)\|$. But by the claim above we have that:
\begin{equation*}
\lambda^*(C_i,z) = \|s_i(z w)\|- \|s_i(w)\|\ge 0.
\end{equation*}
Thus, $\lambda^*(C_i,z)$ is not negative contradicting the assumption for a negative cycle in $G$. \qed
\end{proof}

Now we are ready to prove Lemma~\ref{lemma:Nerode-Myhill}
\begin{proof}[Lemma~\ref{lemma:Nerode-Myhill}]
Let $\alpha_i$, $A_i$, $s_i$, and $G$ be as above. For each $i$ we define:
\begin{equation*}
\tuple{j_i,z_i} =\arg\min_{\tuple{j,z}} \{\lambda_R^*(C_j,z)\,|\, \delta_R^*(C_j,z)=C_i\}.
\end{equation*}
Since $G$ contains no negative cycles all the pairs $\tuple{j_i,z_i}$ are well defined. By the
same argument all the minima are attained also for words $z'_i\in \Sigma^{<n}$. Thus, we can and
we do assume that $|z_i|<n$ for all $i$. 
Note that by the Claim for $G$ we have that if $\delta^*(C_j,z)=C_i$, then:
\begin{equation*}
\lambda_R^*(C_j,z) = \|s_j(zw)\| - \|s_i(w)\|.
\end{equation*}
for any $\alpha_i w\in \Dom(f)$. 
Thus if $\alpha_i w\in \Dom(f)$, minimising $\lambda_R^*(C_j,z)$ subject to $\delta_R^*(C_j,z)=C_i$ is the same as minimising $\|s_j(zw)\|$.
We define $\widehat{s}_i$ as:
\begin{equation*}
\widehat{s}_i(w) = s_{j_i}(z_i w) \text{ for all }i\le n.
\end{equation*} 
By the remark above we have that $\|\widehat{s}_i(w)\| \le \|s_i(w)\|$.

We prove that $\widehat{s}_i$ satisfy the required properties. First we show that for any $i,j\le n$ and character $a\in \Sigma$ with $\alpha_i a \equiv_{R_f} \alpha_j$ there is a monoid element
$m_{i,a,j}\in M$ with:
\begin{equation*}
\widehat{s}_i(aw) =m_{i,a,j}\widehat{s}_j (w) \text{ whenever } \alpha_j w \in \Dom(f).
\end{equation*}
Indeed, since$\alpha_{j_i} z_i \in C_i$ and $|\alpha_{j_i} z_i|< n + n=2n$, we have that $\alpha_{j_i} z_i \in A_i$.
Therefore, by Lemma~\ref{finite_case} there is some element $m_i$ such that
$\tuple{m_i,m_i'',s_i}$ is a witness for $\alpha_{j_i}z_i \equiv_{R_f} \alpha_i$. Hence for all $w$ such that
$\alpha_{j_i} z_i w\in  \Dom(f)$ we have:
\begin{equation*}
f(\alpha_{j_i} z_i w)=m_i s_i(w). 
\end{equation*}
On the other hand by the definition of $m_{j_i}''$ and $s_{j_i}$ we have that:
\begin{equation*}
f(\alpha_{j_i} z_i w_i)=m_{i_j}'' s_{i_j}''(z_iw)=m_{i_j}'' \widehat{s}_i(w). 
\end{equation*}
By these two equalities, and taking into account that $\|\widehat{s}_i(w)\|\le \|s_i(w)\|$
for all $\alpha_iw\in \Dom(f)$, by Lemma~\ref{RightCancellation} we conclude that there is some $b_i$ such that:
\begin{equation*}
s_i(w) = b_i \widehat{s}_i(w) \text{ for } \alpha_i w\in \Dom(f).
\end{equation*}
Consequently for each $i\le n$ and each $\beta\in A_i$ there is a witness $\tuple{m_{\beta}b_i,m_i'' b_i,\widehat{s}_i}$

Now, since $\alpha_i \in C_i$ and $\alpha_i a\in C_j$ is of length $|\alpha_i a| \le n< 2n$, we get that
$\alpha_i a\in A_j$. Therefore there is a witness $\tuple{m,m''_jb_j,\widehat{s}_j}$ for $\alpha_i a\equiv_{R_f} \alpha_j$.
Therefore for all $\alpha_i aw \in \Dom(f)$ we have:
\begin{equation*}
f(\alpha_i a w) = m \widehat{s}_j(w) \text{ and } f(\alpha_i a w)=m''_i b_i \widehat{s}_i(aw).
\end{equation*}
Since $\|\widehat{s}_j(w)\| \le \|\widehat{s}_i(aw)\|$, by Lemma~\ref{RightCancellation}, we obtain that $m_i''\le_M m$. Let $m_{i,a,j}$ be such that:
\begin{equation*}
m_i'' m_{i,a,j}=m.
\end{equation*}
Thus, for all $w$ such that $\alpha_j w\in \Dom(f)$ we have:
\begin{equation*}
\widehat{s}_i(aw)= m_{i,a,j} \widehat{s}_j(w).
\end{equation*} 

Finally, we prove that for each $i$, for each $\beta\in C_j$ there is a witness $\tuple{m_{\beta},m''_j b_j, \widehat{s}_j}$ for
$\beta\equiv_{R_f} \alpha_j$. We proceed by induction. This is obvious in the case where $|\beta|\le 2n$, since in this case $\beta\in A_j$.
If $\beta=\beta' a$, then there is some $i$ with $\beta' \equiv_{R_f} \alpha_i$. By the induction hypothesis there is a witness $
\tuple{m_{\beta'},m_i'' b_i,\widehat{s}_i}$ for $\beta'\equiv_{R_f} \alpha_i$. Thus for each $w$ with $\alpha_j w\in \Dom(f)$ we have:
\begin{equation*}
f(\beta' a w) = m_{\beta'}\widehat{s}_i(aw)= m_{\beta'}m_{i,a,j}\widehat{s}_j(w).
\end{equation*}
This shows that $\tuple{m_{\beta'}m_{i,a,j},m_j''b_j,\widehat{s}_j}$ is a witness for $\beta \equiv_{f} \alpha_j$. \qed
\end{proof}

\begin{remark}
The construction of the subsequential transducer $G$ assumes that there are witnesses $w_i$ for each class $C_I$,
i.e. that every word is $f$-essential. However, we can easily amend this proof to the general case by noting that there
is at most one class $C_{err}$ that contains exactly those words that are not $f$-essential. Furthermore $C_{err}$ will
have arcs leading only to $C_{err}$. Thus, we can safely remove the class $C_{err}$ from the construction of $G$
and endow the remaining classes $C_i$ with the appropriate $\widehat{s}_i$. Adding $\widehat{s}_{err}=\emptyset$ does
not spoil the conclusion of Lemma~\ref{lemma:Nerode-Myhill} since there are no witnesses for the words in $C_{err}$
and no arcs leave this class.
\end{remark}


\section{Discussion and Conclusion}
In the paper we have considered the Nerode-Myhill relation for functions over sequentiable structures
and proved that any function of finite index is a subsequential rational function. This generalises the
results of Mohri~\cite{Mohri00} for the case of free monoids and the monoid of nonnegative real numbers with addition.

\bibliographystyle{splncs03}
\bibliography{lata2107}
\end{document}